\renewcommand{\div}{\mbox{div}}
\DeclareFontFamily{OT1}{rsfs}{}
\DeclareFontShape{OT1}{rsfs}{m}{n}{ <-7> rsfs5 <7-10> rsfs7 <10-> rsfs10}{}
\DeclareMathAlphabet{\mathscr}{OT1}{rsfs}{m}{n}
\newcommand{\eq}[1]{\eqref{#1}}
\newcommand{\bel}[1]{\begin{equation}\label{#1}}
\newcommand{\beal}[1]{\begin{eqnarray}\label{#1}}
\newcommand{\beadl}[1]{\begin{deqarr}\label{#1}}
\newcommand{\eeadl}[1]{\arrlabel{#1}\end{deqarr}}
\newcommand{\eeal}[1]{\label{#1}\end{eqnarray}}
\newcommand{\eead}[1]{\end{deqarr}}
\newcommand{\eea}{\end{eqnarray}}
\newcommand{\eeaa}{\end{eqnarray*}}
\newcommand{\be}{\begin{equation}}
\newcommand{\ee}{\end{equation}}
\DeclareFontFamily{OT1}{rsfs}{}
\DeclareFontShape{OT1}{rsfs}{m}{n}{ <-7> rsfs5 <7-10> rsfs7 <10->
rsfs10}{} \DeclareMathAlphabet{\mycal}{OT1}{rsfs}{m}{n}
\newcounter{mnotecount}[section]
\newcommand{\N}{{\Bbb N}}
\newcommand{\rmnote}[1]{}
\newcommand{\Ric}{\operatorname{Ric}}
\def\mysavedown#1{\edef\mysubs{\mysubs#1}}
\def\mysaveup#1{\edef\mysups{\mysups#1}}
\def\mydown#1{{\mytensor}_{\vphantom{\mysubs}#1}}
\def\myup#1{{\mytensor}^{\vphantom{\mysups}#1}}
\def\tensor#1#2{
  #1
  \def\mytensor{\vphantom{#1}}
  \def\mysubs{\relax}
  \def\mysups{\relax}
  \let\down=\mysavedown
  \let\up=\mysaveup
  #2
  \let\down=\mydown
  \let\up=\myup
  #2
  }
\newcommand{\Riem}{\operatorname{Riem}}
\newcommand{\Tr}{\operatorname{Tr}}
\renewcommand{\div}{\operatorname{div}}
\renewcommand{\phi}{\varphi}
\renewcommand{\epsilon}{\varepsilon}
\def\crn#1#2{{\vcenter{\vbox{
        \hbox{\kern#2pt \vrule width.#2pt height#1pt
           }
          \hrule height.#2pt}}}}
\renewcommand{\hbar}{{\overline h}}
\newcommand{\pre}[2]{{{\vphantom{#2}}^{#1}}\kern-.2ex{#2}}
\theoremstyle{plain}
\newtheorem{theorem}{Theorem}[section]
\newtheorem{proposition}[theorem]{Proposition}
\theoremstyle{definition}
\newtheorem*{remark}{Remark}
\numberwithin{equation}{section}
\date{March 17, 2008}
\begin{document}
\title[Extended constraint equations on AH manifolds]
{Perturbative solutions to the extended constant scalar
curvature equations on asymptotically hyperbolic manifolds}

\author[E. Delay]{Erwann Delay}
\address{Erwann Delay,
Institut de Math\'ematiques et Mod\'elisation de Montpellier,
UMR 5149 CNRS, Universit\'e Montpellier II,
Place Eug\`ene Bataillon, 34095 Montpellier cedex 5, France}
\email{Erwann.Delay@univ-avignon.fr}
\urladdr{http://www.math.univ-avignon.fr/Delay}

\begin{abstract}
We prove local existence  of  solutions to the
extended constant scalar curvature equations introduced by A.~Butscher, in the asymptotically hyperbolic setting. This gives a new local construction
of
asymptotically hyperbolic metrics with constant scalar
curvature.
\end{abstract}

\maketitle

\noindent {\bf Keywords} : Asymptotically hyperbolic
manifolds, general relativity, constraint equations, symmetric
2-tensor,
 asymptotic behavior.
\\
\newline
{\bf 2000 MSC} : 35J50, 58J05,  35J70, 35J60, 35Q75.
\\
\newline

\section{Introduction}\label{section:intro}
The study of constant scalar curvature metrics plays a
particulary important part in riemannian geometry and in
general relativity. For instance in the riemannian point of vue it
gives rise to the well known Yamabe problem when working in a
conformal class. For the general relativity, constant scalar
curvature riemannian metrics are particular solutions  to the
constraint equations, in the time symmetric case. In
\cite{Butscher:extended}, A. Butscher introduce a system of
equations called the "extended constraint equations", equivalent to
the usual constraints. This new system has the advantage that, as
the Einstein equation for riemannian metrics, it can naturally be
modified to a determined elliptic system by adding a
 gauge-breaking term.

In the asymptotically flat setting, A. Butscher proves in \cite{Butscher:extended}
that his system can be solved near the usual euclidian metric.

The present note is a study of the extended constraint equation on
asymptotically hyperbolic (A.H.) manifolds, in the time symmetric
case, near an Einstein metric. It will appear that
the proof is more simple in our context due to the fact we do not
have to handle with a kernel. Let us now introduce some
notations and the system we want to study. For a riemannian metric
$g$ on a manifold $M^n$, let us denote by $R(g)$ the scalar
curvature of $g$. In the asymptotically hyperbolic setting, the
constant scalar curvature equation is \bel{Rcte} R(g)=-n(n-1). \ee
The extended scalar curvature equation is \bel{extend1}
\Ric(g)+(n-1)g=S, \ee \bel{extend2} \div_gS=0, \ee where, for a
metric $g$, a symmetric two tensor $T$ and a one form $\xi$, \bel{S}
S=T-\frac{1}{n}\Tr_g(T)g+\mathring{\mathcal L}\xi, \ee is a trace free
symmetric two tensor, and $\mathring{\mathcal L}$ is the conformal
killing operator (see section \ref{sec:def}). At this
stage it is important to remark that equation \eq{extend2} is a
consequence of \eq{extend1} by the Bianchi identity. Taking the
trace of equation \eq{extend1} we see that any solution $g$ has
constant scalar curvature satisfying \eq{Rcte}. Reciprocally  any
metric of constant scalar curvature $R(g)=-n(n-1)$ is a solution of
\eq{extend1} with, for instance, $\xi=0$ and
$T=\Ric(g)+(n-1)g$\footnote{Of course, we can also take $T=\Ric(g)$
but our choice here is more natural in the A.H. context}.

We fix  a smooth asymptotically hyperbolic metric $g_0$.
The weighted H\"older spaces $C^{k,\alpha}_s$ we will work with are the one introduced by Lee \cite{Lee:fredholm}. It consist of tensor field of the form $\rho^s u$ for
$u$ in the usual H\"older space $C^{k,\alpha}$.

The
theorem we will prove is the following
\begin{theorem}\label{maintheorem}
Let $s\in(0,n-1)$, $k\in\N\backslash\{0\}$, $\alpha\in(0,1)$. Let
$g_0$ be a non-degenerate A.H. Einstein metric. For
all sufficiently small $T\in C^{k,\alpha}_s$,  there exists  $(\xi,h)$ close to zero in
$C^{k+1,\alpha}_s\times C^{k+2,\alpha}_s$ such that $g=g_0+h$ and
$\xi$ solves \eq{extend1} and \eq{extend2} (then $g$
solves \eq{Rcte}). The map
$$
\begin{array}{llll}
{\mathcal C}: &C^{k,\alpha}_s&\longrightarrow &C^{k+1,\alpha}_s\times C^{k+2,\alpha}_s\\
&T&\mapsto&(\xi,h)
\end{array}
$$
 is smooth near zero.
\end{theorem}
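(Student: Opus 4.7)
\medskip\noindent\textbf{Proof sketch.} The natural strategy is to apply the implicit function theorem to the smooth map
\[
\mathcal{G}(\xi,h,T):=\Ric(g_0+h)+(n-1)(g_0+h)-T+\tfrac{1}{n}\Tr_{g_0+h}(T)(g_0+h)-\mathring{\mathcal L}_{g_0+h}\xi,
\]
from $C^{k+1,\alpha}_s\times C^{k+2,\alpha}_s\times C^{k,\alpha}_s$ into $C^{k,\alpha}_s$. Since $g_0$ is Einstein with $\Ric(g_0)=-(n-1)g_0$ we have $\mathcal{G}(0,0,0)=0$, and the partial linearization in $(\xi,h)$ at the origin is
\[
L(v,k):=D\Ric|_{g_0}(k)+(n-1)k-\mathring{\mathcal L}_{g_0}v.
\]
Theorem~\ref{maintheorem} will follow from the smooth IFT once $L$ is shown to be a topological isomorphism $C^{k+1,\alpha}_s\times C^{k+2,\alpha}_s\to C^{k,\alpha}_s$.

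To invert $L$, I would split the equation $L(v,k)=f$ into two successive elliptic problems by means of the Bianchi operator $\beta u:=\div_{g_0}u-\tfrac{1}{2}d\,\Tr_{g_0}u$. Differentiating the identity $\beta_g(\Ric(g)+(n-1)g)\equiv 0$ at $g_0$, where the background term $\Ric(g_0)+(n-1)g_0$ vanishes, gives the linearized Bianchi identity $\beta\bigl(D\Ric(k)+(n-1)k\bigr)=0$ for every $k$. Applying $\beta$ to $L(v,k)=f$ therefore yields the decoupled equation on one-forms
\[
\delta\mathring{\mathcal L}\,v=-\beta f.
\]
The conformal Killing Laplacian $\delta\mathring{\mathcal L}$ is a second order, formally self-adjoint, geometric elliptic operator; in the Lee weighted H\"older framework it is, for every $s\in(0,n-1)$, an isomorphism from $C^{k+1,\alpha}_s$ onto $C^{k-1,\alpha}_s$, because its indicial roots lie outside the chosen weight interval and the non-degeneracy of $g_0$ excludes any $L^2$-kernel. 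This uniquely determines $v$ from $f$.

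Setting $\tilde f:=f+\mathring{\mathcal L}v\in C^{k,\alpha}_s$, which by construction satisfies $\beta\tilde f=0$, it remains to solve $D\Ric|_{g_0}(k)+(n-1)k=\tilde f$ in $C^{k+2,\alpha}_s$. Using the standard DeTurck identity at an Einstein background,
\[
D\Ric|_{g_0}(k)+(n-1)k=Pk+\delta^*\beta k,
\]
where $P$ is the resulting Lichnerowicz-type second order elliptic operator on symmetric two-tensors, one first solves the elliptic, gauge-reduced equation $Pk=\tilde f$; by Lee's Fredholm theory combined with the non-degeneracy of $g_0$, $P$ is an isomorphism $C^{k+2,\alpha}_s\to C^{k,\alpha}_s$ for the same weight range. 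Applying $\beta$ to $Pk=\tilde f$ and invoking once more the linearized Bianchi identity produces a homogeneous elliptic equation on the one-form $\beta k$, governed by a Hodge-type operator whose weighted kernel is again trivial, so $\beta k=0$. It follows that $D\Ric(k)+(n-1)k=Pk=\tilde f$ and hence $L(v,k)=f$; continuity of the construction and an identical uniqueness argument show that $L$ is the desired isomorphism.

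The main obstacle is thus the analytic input, namely the simultaneous isomorphism statements for the three geometric elliptic operators that appear --- $\delta\mathring{\mathcal L}$ on one-forms, the Lichnerowicz-type $P$ on symmetric two-tensors, and the Hodge-type operator controlling $\beta k$ --- on the Lee weighted H\"older spaces for the full range $s\in(0,n-1)$. All three are standard consequences of the Fredholm theory developed by Lee on conformally compact manifolds, together with the non-degeneracy hypothesis on $g_0$ which rules out the corresponding $L^2$-kernels. Once they are in hand, the smooth IFT directly delivers both the existence of $(\xi,h)$ and the asserted smoothness of $\mathcal{C}:T\mapsto(\xi,h)$ near zero.
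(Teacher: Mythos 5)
There is a genuine gap at the central step: the linearized operator $L(v,k)=D\Ric|_{g_0}(k)+(n-1)k-\mathring{\mathcal L}v$ is \emph{not} a topological isomorphism from $C^{k+1,\alpha}_s\times C^{k+2,\alpha}_s$ onto $C^{k,\alpha}_s$, because you have kept only the single equation \eq{extend1} and have not broken the diffeomorphism invariance. Concretely, for any smooth compactly supported vector field $X$ one has $D\Ric|_{g_0}(\mathcal L_Xg_0)=\mathcal L_X\Ric(g_0)=-(n-1)\mathcal L_Xg_0$, so $(0,\mathcal L_Xg_0)$ lies in the kernel of $L$; the kernel is infinite dimensional. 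Your Bianchi-operator computation therefore only constructs a \emph{right} inverse: the pair $(v,k)$ you produce satisfies the extra gauge condition $\beta k=0$, but nothing in the equation $L(v,k)=f$ forces that condition, so the ``identical uniqueness argument'' invoked at the end does not exist, the standard implicit function theorem cannot be applied to $\mathcal G$, and neither the existence statement nor the smoothness (or even well-definedness) of the map $\mathcal C:T\mapsto(\xi,h)$ follows as written. (Two minor points: you never address \eq{extend2} --- harmless, since tracing \eq{extend1} gives $R(g)=-n(n-1)$ and the contracted Bianchi identity then yields $\div_gS=0$; and the triviality of the $L^2$-kernel of $\mathring{\mathcal L}$ is a general fact for A.H. metrics, not a consequence of the non-degeneracy hypothesis, which concerns $\Delta_L+2(n-1)$ on trace-free tensors.)

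The missing ingredient is an explicit gauge choice, and this is precisely how the paper proceeds: it adds the DeTurck-type term ${\mathcal L}_g(B_g(g_0))$, with $B_g(h)=\div_gh+\frac12 d(\Tr_gh)$, to \eq{extend1} and keeps $\div_gS=0$ as a second equation (system \eq{extend1b}--\eq{extend2b}); the linearization at the origin is then triangular, with diagonal entries $\frac12\Delta_L+(n-1)$ on symmetric two-tensors and $\div\circ\mathring{\mathcal L}$ on one-forms, both isomorphisms on the stated weighted spaces, so the implicit function theorem applies directly and yields a smooth solution map (Proposition \ref{gaugesol}); one then shows a posteriori that the gauge one-form $\omega=B_g(g_0)$ satisfies $\Delta_g\omega-\Ric(g)\omega=0$ and hence vanishes because $\Ric(g_0)<0$ (Proposition \ref{sol}). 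Alternatively, your computation could be salvaged by making the gauge choice linearly: what your argument actually proves is that $L$ restricted to $C^{k+1,\alpha}_s\times\{k:\beta k=0\}$ is an isomorphism onto $C^{k,\alpha}_s$, so one could apply the implicit function theorem to $\mathcal G$ restricted to that slice. Either way, some gauge fixing must be stated and used; without it the proof is incomplete.
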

Here we say that the metric $g$ is {\it non-degenerate} if the
$L^2$-kernel of $\Delta_L+2(n-1)$ acting on trace free symmetric two tensors is trivial,
$\Delta_L$ being the Lichnerowicz laplacian of $g$ (see section \ref{sec:def}). This condition is satisfied for instance on the hyperbolic space but
also on a large class of A.H. manifolds (see \cite{Lee:fredholm} and Theorem 2.2 of \cite{ACD2}).

With the construction we use, the metric $g$ we obtain also satisfies that the identity
map from $(M,g)$ to $(M,g_0)$ is harmonic.

\medskip

{\small\sc Acknowledgments}. {\small }
I am grateful to F. Gautero for his comments on the original manuscript.

\section{Definitions, notations and conventions}\label{sec:def}

Let $(\overline{M},g)$ be a smooth, compact $n$-dimensional manifold
with boundary $\partial_\infty M$.
Let $M:=\overline{M}\backslash\partial_\infty{M}$ which is a non-compact
manifold.
We call $\partial_\infty M$
the {\it boundary at infinity} of $M$.
 Let $g$ be
a Riemannian metric on $M$. We say that $(M,g)$ is {\it
conformally compact} if there exists a smooth defining function
$\rho$ on $\overline{M}$ (that is $\rho\in C^\infty(\overline{M})$,
$\rho>0$ on $M$, $\rho=0$ on $\partial_\infty {M}$ and $d\rho$
is nowhere vanishing on $\partial_\infty M$) such that
$\overline{g}=\rho^{2}g$ is a smooth Riemannian metric on
$\overline{M}$. If $|d\rho|_{\overline{g}}=1$ on
$\partial_\infty M$, it is well known that $g$ has asymptotically
sectional curvature $-1$ (see \cite{Mazzeo:hodge}  for example) near
its boundary at infinity. In this case we say that $(M,g)$ is
{\it asymptotically hyperbolic}.

If we moreover assume that $g$ has constant scalar
curvature then asymptotic hyperbolicity enforces the
normalisation
$$
R(g)=-n(n-1),
$$
where $R(g)$ is the scalar curvature of $g$.
Also if $g$ is Einstein then the Ricci curvature of $g$ is
$$
\Ric(g)=-(n-1)g.
$$
We denote by $\nabla$ the  Levi-Civita connexion of $g$ and by
$\Riem(g)$ the Riemannian sectional curvature of $g$.

We denote by ${\mathcal T}_p$ the set of rank $p$ covariant tensors.
When $p=2$, we denote by ${\mathcal S}_2$ the subset of symmetric
tensor which splits as ${\mathcal G}\oplus {\mathring{\mathcal
S}_2}$ where ${\mathcal G}$ is the set of $g$-conformal tensors and
${\mathring{\mathcal S}_2}$ is the set of trace-free
tensors (relatively to $g$). We observe the summation
convention (the corresponding indices run from $1$ to $n$), and we
use $g_{ij}$ and its inverse $g^{ij}$ to lower or raise indices.

The Laplacian is defined as
$$
\triangle=-tr\nabla^2=\nabla^*\nabla,
$$
where $\nabla^*$ is the $L^2$ formal adjoint of $\nabla$. The
Lichnerowicz Laplacian acting on symmetric covariant 2-tensors is
$$
\triangle_L=\triangle+2(\Ric-\Riem),
$$
where
$$(\Ric\; u)_{ij}=\frac{1}{2}[\Ric(g)_{ik}u^k_j+\Ric(g)_{jk}u^k_i],
$$
and
$$
(\Riem \; u)_{ij}=\Riem(g)_{ikjl}u^{kl}.
$$
For $u$ a covariant 2-tensorfield on $M$ we define the divergence of
$u$ by $$ (\mbox{div}u)_i=-\nabla^ju_{ji}.$$ For a one-form
$\omega$ on $M$, we define the divergence of $\omega$ :
$$
d^*\omega=-\nabla^i\omega_i,
$$
the symmetric part of its covariant derivative :
$$
({\mathcal
L}\omega)_{ij}=\frac{1}{2}(\nabla_i\omega_j+\nabla_j\omega_i),$$
(note that ${\mathcal L}^*=\mbox{div}$) and the trace free part of
that last tensor :
$$
(\mathring{\mathcal
L}\omega)_{ij}=\frac{1}{2}(\nabla_i\omega_j+\nabla_j\omega_i)+\frac{1}{n}(d^*\omega)
g_{ij}.$$ The operator $\mathring{\mathcal L}$ is sometimes called
the conformal killing operator. Its formal $L^2$ adjoint
acting on trace free symmetric two tensors is
$\mathring{\mathcal L}^*=\div$.

\section{Gauge-broken equation and proof}
It is well known that the system \eq{extend1},\eq{extend2} is not
elliptic because of the invariance by diffeomorphism. As usual
in this context, we will add a gauge term to the system in such a
way that it becomes elliptic, and that the solutions to the
new system are solutions to the original one.

Let us define an operator from symmetric two tensors to one-forms :
$$
B_g(h)=\div_gh+\frac{1}{2}d(\Tr_gh).
$$

The new system we consider is
\bel{extend1b}
\Ric(g)+(n-1)g-{\mathcal L}_g(B_g(g_0))=S,
\ee
\bel{extend2b}
\div_gS=0,
\ee
where $S$ is as in \eq{S}.
First let us verify that the solutions to the new system are solution
to the original one.
\begin{proposition}\label{sol}
Let $s\in[0,n)$, $k\in\N\backslash\{0\}$, $\alpha\in(0,1)$.
Let $g_0$ be an A.H. metric with negative Ricci curvature.
If  $h\in C^{k+2,\alpha}_s$ is sufficiently small, and the metric
$g=g_0+h$ is a solution of \eq{extend1b}\eq{extend2b}, then it is a solution
of  \eq{extend1}\eq{extend2}.
\end{proposition}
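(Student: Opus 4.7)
The plan is to introduce $V := B_g(g_0)$ and show that $V = 0$; once this is established, the $\mathcal{L}_g(B_g(g_0))$ term in \eq{extend1b} vanishes, whereupon \eq{extend1b} reduces to \eq{extend1} while \eq{extend2b} coincides with \eq{extend2}. Because $B_{g_0}(g_0) = 0$ (since $\nabla^{g_0} g_0 = 0$ and $\Tr_{g_0}g_0 = n$) and $g \mapsto B_g(g_0)$ is smooth, $V$ belongs to $C^{k+1,\alpha}_s$ and is of size comparable to $\|h\|_{C^{k+2,\alpha}_s}$.

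The central step is to derive a closed linear elliptic equation for $V$. Taking the $g$-trace of \eq{extend1b} and using that $S$ is trace-free together with $\Tr_g \mathcal{L}_g V = -d^*V$ gives $R(g) = -n(n-1) - d^*V$. Applying $\div_g$ to \eq{extend1b}, invoking \eq{extend2b} on the right-hand side, $\div_g g = 0$, the contracted Bianchi identity $\div_g \Ric(g) = -\tfrac12\, dR(g)$, and the previous identity, one obtains
$$\div_g \mathcal{L}_g V = \tfrac12\, d(d^*V).$$
The standard Weitzenbock identity
$$\div_g \mathcal{L}_g V = \tfrac12 \Delta_g V + \tfrac12\, d(d^*V) + \tfrac12 \Ric(g)(V)$$
then forces the $d(d^*V)$ terms to cancel, leaving
$$(\Delta_g + \Ric(g))\,V = 0,$$
so $V$ lies in the kernel of the Hodge Laplacian on $1$-forms with respect to $g$.

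The final step is to deduce $V = 0$. The operator $P_g := \Delta_g + \Ric(g)$ is elliptic of order two; for $g_0$ asymptotically hyperbolic with negative Ricci, the Bochner identity $\int|\nabla V|^2 = -\int \Ric(g_0)(V,V)$ together with $\Ric(g_0) < 0$ shows that $P_{g_0}$ has trivial $L^2$-kernel. Combined with Lee's Fredholm and indicial-root analysis \cite{Lee:fredholm} for geometric operators on A.H.\ manifolds, this upgrades to $P_{g_0}$ being an isomorphism $C^{k+1,\alpha}_s \to C^{k-1,\alpha}_s$ for $s \in [0,n)$. A standard perturbation argument, valid since $\|h\|_{C^{k+2,\alpha}_s}$ is small, extends the isomorphism to $P_g$, and $P_g V = 0$ then forces $V = 0$. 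The main technical point is verifying that the admissible weighted range for the Hodge Laplacian isomorphism covers the full interval $[0,n)$; this rests on Lee's indicial root calculations together with the negative Ricci hypothesis ruling out the $L^2$-kernel.
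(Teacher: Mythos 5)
Your overall strategy is the same as the paper's: set $V=B_g(g_0)$, derive a second-order linear elliptic equation for $V$ from \eq{extend1b}--\eq{extend2b}, and conclude $V=0$ from negative Ricci curvature, Lee's weighted Fredholm theory, and a perturbation argument in $h$. (The paper applies $B_g=\div_g+\tfrac12 d\Tr_g$ to \eq{extend1b} in one stroke; your combination of $\div_g$ with the traced equation is the same computation.) However, there is a sign error in your Weitzenbock identity, and it invalidates your final vanishing step as written. With the paper's conventions ($\Delta=\nabla^*\nabla$, $d^*\omega=-\nabla^i\omega_i$, $(\div u)_i=-\nabla^j u_{ji}$), the contracted commutation identity $\nabla^i\nabla_j V_i-\nabla_j\nabla^i V_i=\Ric(g)_j{}^l V_l$ gives
$$
\div_g\mathcal{L}_g V=\tfrac12\Delta_g V+\tfrac12\, d(d^*V)-\tfrac12\,\Ric(g)(V),
$$
with a \emph{minus} sign on the Ricci term. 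Combined with your (correct) identity $\div_g\mathcal{L}_g V=\tfrac12 d(d^*V)$, this yields $(\Delta_g-\Ric(g))V=0$, which is the paper's equation \eq{omega}, not the Hodge Laplacian equation $(\Delta_g+\Ric(g))V=0$ that you state.

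The distinction is not cosmetic: your kernel argument fails for the operator you wrote down. If $(\Delta+\Ric)V=0$, integration by parts gives $\int|\nabla V|^2+\int\Ric(V,V)=0$, and with $\Ric<0$ the two terms can balance; negative Ricci does not exclude $L^2$-harmonic one-forms (on a hyperbolic surface they form an infinite-dimensional space), so the claim that the Bochner identity together with $\Ric(g_0)<0$ gives trivial $L^2$-kernel is unjustified for $\Delta+\Ric$. It is precisely the operator $\nabla^*\nabla-\Ric$ whose $L^2$-kernel is killed by negative Ricci, since then $\int|\nabla V|^2-\int\Ric(V,V)=0$ forces $V=0$. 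Once the sign is corrected, the rest of your argument --- Lee's isomorphism theorem on the weighted H\"older scale (valid for weights in $(-1,n)\supset[0,n)$), stability of the isomorphism under the small perturbation $g=g_0+h$, and membership $V\in C^{k+1,\alpha}_s$ --- coincides with the paper's proof.
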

\begin{proof}
We will apply the operator $B_g$ to the equation \eq{extend1b}. We
remark that $\div S=\Tr S=0$, that $B_g(\Ric(g))=0$ by Bianchi
identity, and that $B_g(g)=0$. Then, if we denote by $\omega$ the
one-form $B_g(g_0)$, we obtain that
$$
B_g({\mathcal L}_g\omega)=0.
$$
This equation reads in local coordinate:
$$
-\nabla^{i}\left[\frac{1}{2}(\nabla_i\omega_j+\nabla_j\omega_i)\right]
+\frac{1}{2}\nabla_j\nabla^i\omega_i=0.
$$
Commuting derivatives and multiplying by $2$, we obtain that
\bel{omega} \Delta_g\omega-\Ric(g)\omega=0. \ee As $\Ric(g_0)$ is
negative, the operator $\Delta_{g_0}-\Ric(g_0)$ has no $L^2$ kernel
on one-forms so from \cite{Lee:fredholm}, it is an isomorphism from
$C^{k+1,\alpha}_t$ to $C^{k-1,\alpha}_t$, for all $t\in(-1,n)$. Now
if $h$ is small in $C^{k+2,\alpha}_s$, then  it is small in
$C^{k+2,\alpha}_0$, so the operator $\Delta_{g}-\Ric(g)$ is
still an isomorphism between the same spaces. As $\omega\in
C^{k+1,\alpha}_s$ with $s\in[0,n)$, we conclude that $\omega=0$.
\end{proof}

\begin{remark} 
The fact that $B_g(g_0)$ vanishes show that the identity map from $(M,g)$ to
$(M,g_0)$ is harmonic (see \cite{GL} for instance).
\end{remark}

Let us now construct  solutions  to the system \eq{extend1b}\eq{extend2b} by
an implicit function theorem on Banach spaces.
\begin{proposition}\label{gaugesol}
Let $s\in(0,n-1)$, $k\in\N\backslash\{0\}$, $\alpha\in(0,1)$.
Let $g_0$ be a non degenerate A.H. Einstein metric.
Then for all $T\in C^{k,\alpha}_s$ sufficiently small, there exist a unique
$(\xi,h)$ close to zero in  $C^{k+1,\alpha}_s\times C^{k+2,\alpha}_s$ such that $g=g_0+h$ and
$\xi$ solve \eq{extend1b} and \eq{extend2b}.
\end{proposition}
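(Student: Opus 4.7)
My plan is to apply the implicit function theorem in Banach spaces to the smooth map
\[
F(\xi, h, T) = \bigl(\Ric(g) + (n-1)g - \mathcal{L}_g B_g(g_0) - S,\ \div_g S\bigr),
\]
where $g = g_0 + h$ and $S = T - \tfrac{1}{n}\Tr_g T \cdot g + \mathring{\mathcal{L}}\xi$, regarded as a map
\[
F : C^{k+1,\alpha}_s \times C^{k+2,\alpha}_s \times C^{k,\alpha}_s \longrightarrow C^{k,\alpha}_s \times C^{k-1,\alpha}_s.
\]
Smoothness near the origin is routine since all coefficients depend rationally on $g$, $g^{-1}$ and polynomially on covariant derivatives of $g, \xi, T$, with $g^{-1}$ controlled for $h$ small in $C^{k+2,\alpha}_0$. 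The equality $F(0,0,0) = 0$ follows from the Einstein condition $\Ric(g_0) = -(n-1)g_0$, the direct identity $B_{g_0}(g_0) = 0$, and the vanishing of $S$ when $T = \xi = 0$.

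The crucial step is to compute $D_{(\xi,h)}F$ at the origin and to exploit its block-triangular structure. The gauge term $-\mathcal{L}_g B_g(g_0)$ is introduced precisely so that the DeTurck-type identity
\[
D_h\bigl[\Ric(g) - \mathcal{L}_g B_g(g_0)\bigr]\bigr|_{g_0}[\delta h] = \tfrac{1}{2}\Delta_L\, \delta h
\]
holds at any background; here one uses $B_g(g) = 0$ together with linearity of $B_g$ in its argument, giving $B_g(g_0) = -B_g(h)$, so that the $\mathcal{L} B$-term in $D\Ric$ is exactly cancelled. Together with $D_h[(n-1)g] = (n-1)\delta h$ and $D_\xi F_1 = -\mathring{\mathcal{L}}$, this gives the first row. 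For the second component, since $S$ vanishes at the base point, every $h$-derivative of $\div_g S$ hits either $S|_0 = 0$ or $D_h S|_0 = 0$ (the latter because the $g$-dependence of $S$ enters only through coefficients of $T$ and $\xi$, both zero there); hence $D_h F_2\bigr|_0 = 0$, while $D_\xi F_2\bigr|_0 = \div_{g_0}\mathring{\mathcal{L}} = \mathring{\mathcal{L}}^*\mathring{\mathcal{L}}$. Thus
\[
D_{(\xi,h)}F\bigr|_0 = \begin{pmatrix} \tfrac{1}{2}(\Delta_L + 2(n-1)) & -\mathring{\mathcal{L}} \\ 0 & \mathring{\mathcal{L}}^*\mathring{\mathcal{L}} \end{pmatrix},
\]
and invertibility reduces by back-substitution to that of each diagonal entry.

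The main technical obstacle is then to verify that these two diagonal operators are isomorphisms between the indicated weighted Hölder spaces, which I would handle via Lee's Fredholm theory on AH manifolds. For $\mathring{\mathcal{L}}^*\mathring{\mathcal{L}} : C^{k+1,\alpha}_s \to C^{k-1,\alpha}_s$ on one-forms, the $L^2$-kernel coincides with that of $\mathring{\mathcal{L}}$ by integration by parts, and no non-trivial conformal Killing one-form lies in $L^2$ on an AH manifold, so Lee's theorem supplies an isomorphism for $s \in (0, n-1)$. For $\tfrac{1}{2}(\Delta_L + 2(n-1)) : C^{k+2,\alpha}_s \to C^{k,\alpha}_s$ on symmetric two-tensors, I would exploit that on the Einstein background the orthogonal splitting $h = u\,g_0 + h'$ with $\Tr_{g_0} h' = 0$ is preserved by the operator: indeed $\Delta_L g_0 = 0$ (from $(\Ric - \Riem) g_0 = 0$) and $\Tr_{g_0}(\Delta_L h') = 0$ for trace-free $h'$. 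On the trace-free summand the $L^2$-kernel is trivial by the non-degeneracy hypothesis, and on the scalar summand the operator reduces to $\Delta + 2(n-1)$ acting on functions, whose $L^2$-kernel is trivial because $\Delta \ge 0$ as a self-adjoint $L^2$-operator while $-2(n-1) < 0$. Lee's theorem yields the isomorphism on each summand, and the implicit function theorem then produces the required smooth solution map $T \mapsto (\xi, h)$, unique near the origin.
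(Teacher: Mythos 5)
Your proposal is correct and follows essentially the same route as the paper: set up the same map $F$, compute the same block-triangular linearization $D_{(h,\xi)}F(0,0,0)$ with diagonal entries $\frac12(\Delta_L+2(n-1))$ and $\div\circ\mathring{\mathcal L}=\mathring{\mathcal L}^*\mathring{\mathcal L}$, show each is an isomorphism on the weighted H\"older spaces via Lee's Fredholm theory (using non-degeneracy and the absence of $L^2$ conformal Killing forms), and conclude by the implicit function theorem. The only difference is that you spell out two points the paper delegates to references: the DeTurck-type cancellation giving $\frac12\Delta_L$ (cited to \cite{Delay:study}) and the trace/trace-free splitting reducing the tensor operator to the non-degeneracy hypothesis plus the scalar operator $\Delta+2(n-1)$.
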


\begin{proof}
Let us consider the map from a neighborhood of zero
in
$C^{k+2,\alpha}_s\times C^{k+1,\alpha}_s \times C^{k,\alpha}_s$ to
$C^{k,\alpha}_s\times C^{k-1,\alpha}_s$ defined by
$$
F(h,\xi,T):=\left(
\begin{array}{c}\Ric(g)+(n-1)g-{\mathcal L}_g(B_g(g_0))-S\\
\div_g S\end{array}\right),
$$
where $g=g_0+h$  and $S$ is defined
by \eq{S}. The map $F$ is well defined and differentiable in a
neighborhood of zero. We also have $F(0,0,0)=0$ and the derivative
of $F$ in the first two variables at the origin  is (this is at this stage
we use $g_0$ is Einstein, see \cite{Delay:study} for instance)
$$
D_{(h,\xi)}F(0,0,0)(\delta h,\delta\xi)=\left(
\begin{array}{c}\frac{1}{2}\Delta_L(\delta h)+(n-1)(\delta h)-\mathring{\mathcal L}(\delta\xi)\\
\div\circ\mathring{\mathcal L} (\delta\xi)\end{array}\right),
$$
where all the operators are relative to the metric $g_0$. The
hypothesis that $g_0$ is non degenerate together with the results of
\cite{Lee:fredholm} give us  that the operator
$\frac{1}{2}\Delta_L+(n-1)$ is an isomorphism from
$C^{k+2,\alpha}_s$ to $C^{k,\alpha}_s$ when $s\in(0,n-1)$. Now  the
operator $\mathring{\mathcal L}$ has no $L^2$ kernel
(\cite{AndChDiss}, \cite{Gicquaud1}) so the same is true for the operator
$\div\circ\mathring{\mathcal L}$ because $\div=\mathring{\mathcal
L}^*$. Again the result of \cite{Lee:fredholm} shows that
$\div\circ\mathring{\mathcal L}$ is an
isomorphism from
$C^{k+1,\alpha}_\sigma$ to $C^{k-1,\alpha}_\sigma$ when
$\sigma\in(-1,n)$. We then conclude that  $D_{(h,\xi)}F(0,0,0)$ is an
isomorphism and the proposition follows by the implicit function
theorem.
\end{proof}
Propositions \ref{gaugesol} and \ref{sol} together prove the main theorem
\ref{maintheorem}

\bibliographystyle{amsplain}

\bibliography{../references/newbiblio,%
../references/reffile,%
../references/bibl,%
../references/hip_bib,%
../references/newbib,%
../references/PDE,%
../references/netbiblio,%
../references/erwbiblio,%
 stationary}

\end{document}